\newtheorem{proposition}{Proposition}
\newtheorem{lemma}{Lemma}
\newtheorem{assumption}{Assumption}
\newtheorem{remark}{Remark}
\newenvironment{proofof}{\noindent {\em Proof of }}{\hfill \hspace*{1pt}
	\hfill $\blacksquare$}
\newcommand{\tr}[1]{\mathrm{tr}{#1}}
\providecommand{\prt}[1]{\left( #1 \right)}
\providecommand{\M}{{\mathcal{M}}}
\newcommand{\alex}[1]{{\color{black} #1}}
\newcommand{\alexb}[1]{{\color{black} #1}}
\newcommand{\jmh}[1]{{#1}}
\newcommand{\jmhnew}[1]{{\color{black} #1}}
\newcommand{\revjh}[1]{{\color{black} #1}}
\title{\LARGE \bf
Spectral Identification of Networks with Inputs
}
\author{Alexandre Mauroy and Julien M. Hendrickx
\thanks{This paper presents research results of the Belgian Network DYSCO (Dynamical Systems, Control, and Optimization), funded by the Interuniversity Attraction Poles Programme initiated by the Belgian Science Policy Office.}
\thanks {A. Mauroy is with Namur Center for Complex Systems (naXys) and Department of Mathematics, University of Namur, Belgium  {\tt\small alexandre.mauroy@unamur.be}}
\thanks {J. Hendrickx is with ICTEAM Institute,
Universit\'e catholique de Louvain, Belgium {\tt\small julien.hendrickx@uclouvain.be}}
}
\begin{document}

\maketitle
\thispagestyle{empty}
\pagestyle{empty}


\begin{abstract}
	We consider a network of interconnected  dynamical systems. Spectral network identification consists in recovering the eigenvalues of the network Laplacian from the measurements of a very limited number (possibly one) of signals. These eigenvalues allow to deduce some global properties of the network, such as bounds on the node degree.
	Having recently introduced this approach for autonomous networks of nonlinear systems, we extend it here to treat networked systems with external inputs on the nodes, in the case of linear dynamics. This is more natural in several applications, and removes the need to sometimes use several independent trajectories. We illustrate our framework with several examples, where we estimate the mean, minimum, and maximum node degree in the network. Inferring some information on the leading Laplacian eigenvectors, we also use our framework in the context of network clustering.
\end{abstract}

%
%
%
%
%
%

\section{INTRODUCTION}

The problem of inferring the structure of a network of dynamical systems is relevant in various fields such as biology (e.g. reconstructing regulatory networks from gene expression data), neuroimaging (e.g. revealing the structural organization of the brain), and engineering (e.g. localizing failures in power grids or computer networks), to list a few.

Most works aim at fully recovering the network structure. In the absence of prior knowledge, this can be proved to require measuring all nodes, which can be costly or \alex{even impossible given the current state of technology}. 
\alex{For example, it is obvious that} measuring all neurons in a brain is out of the question. A possible approach to solve this problem consists in working on a smaller \quotes{compressed} network whose nodes are precisely the measurement points. \alexb{But the topological properties of the identified compressed network may be very different from those of the original network.}

\alex{In this paper, we take the view that re-identifying the whole network is in some cases a too ambitious and costly goal, which is not always necessary. Instead, the aim of our spectral network identification approach is} to recover global information about the network structure \alex{from a small number (possibly one) of local measurements.} This is achieved by using the network Laplacian eigenvalues as proxy. From a practical point of view, we use the \alexb{Dynamic Mode Decomposition (DMD)} method \cite{Schmid, Tu} to extract the eigenvalues \alex{of the whole networked system} \alex{from the observation of a few signals, whether node trajectories or combinations of these.} We then deduce the Laplacian eigenvalues, or approximations of the Laplacian eigenvalues moments, depending on the size of the network. This spectral information provides bounds on various quantities of interest in the network (e.g. mean, minimum, maximum node degree).

We have developed the spectral network identification approach in \cite{MauroyHendrickx:17} for interconnected nonlinear systems without external inputs. The non-linearity made it impossible to work on the \quotes{system eigenvalues}, but we developed an approach similar to that just explained using the eigenvalues of the Koopman operator, which allows to represent a nonlinear system by a linear infinite-dimensional one. However, this did require to assume that the network would eventually synchronize. As a result of the synchronization and absence of external input, only a finite amount of information could be obtained from one experiment (i.e. one trajectory), so that it was necessary to perform several ones (i.e. re-setting the system at an arbitrary initial condition). Our method did for example allow to detect that a new node with lower degree has been added to a network, or to estimate the average node degree, solely by measuring the trajectory of one single node which needs not to be particularly representative of the typical nodes in the network. 


\emph{Contribution:} In this work we focus on \alex{networks of identical} linear interconnected systems, and extend our approach to take into account external inputs, \alex{exploiting} a recent variation of the DMD algorithm \cite{Proctor}. First, this allows to treat \alex{more general} systems where external excitation is present and cannot be removed. \alex{More importantly, inputs continuously excite the system, making} the experiments more informative, with the amount of obtained information growing over time. We can \alex{obtain fair results with a single trajectory, and we no longer assume the system to converge to a synchronized state.}
As a side contribution, we also show that we can infer some information on the leading Laplacian eigenvectors, providing a technique that can be used to cluster measured nodes. \alexb{Our results are illustrated in Section \ref{sec:large_moments}, where we (1) compute the average node degree of a large random network, (2) estimate the minimum and maximum node degree in the Polblogs network with opinion dynamics, and (3) detect clusters in a random graph with planted partitions.}


\paragraph*{Related approaches} Network identification problems have received increasing attention over the past years, and the topic is actively growing in nonlinear systems theory. See e.g. the recent survey \cite{Timme_review}. Many methods have been developed, exploiting techniques from various fields: velocities estimation \cite{Pikovsky_net_ident, Timme2_net_ident}, adaptive control \cite{Yu_net_estimation}, steady-state control \cite{Yu_Parlitz_steady_state}, optimization \cite{net_ident_opti}, compressed sensing \cite{Sauer_net_ident,net_ident_compressed_sens}, stochastic methods \cite{Ren_net_ident}, etc. These methods provide the structural (i.e. exact) connectivity of the underlying network and exploit to do so the dynamical nature of the individual units, which is often known, at least partially. As already mentioned, most of those work aim at fully recovering the network structure. 
\alexb{A few works have also been proposed to estimate the Laplacian eigenvalues of the network \cite{Banaszuk_eigenvalues, Franceschelli, Kibangou}. These methods yield decentralized algorithms, but impose a specific consensus dynamics at each node. In contrast, our method is not decentralized (except when only one node is mesured) but does not impose specific dynamics.}
Finally, we note a difference with an alternative line of works where the structure of the network is known, and the goal is to identify the dynamics taking place at the different nodes or edges, see e.g. \cite{Chiuso,Dankers,Gevers}.

%
%
%
%
%
%

\section{Relevance of Laplacian eigenvalues}\label{sec:intro_laplacian}

%

Consider a weighted network $G=(V,W)$, where $V$ is a set of $n$ nodes and $W_{ij}=[W]_{ij}>0$ is the weight of the edge $(j,i)$,  with in particular $W_{ij}=0$ in the absence of edge. \alex{In the case of unweighted networks, we consider that $W_{ij}=1$ if there is an edge $(j,i)$ and $W_{ij}=0$ otherwise.} The \emph{weighted in-degree} \alex{(or simply degree)} $d_i$ of node $i$ is the sum of the weights of edges arriving at $i$, i.e. $d_i = \sum_{j}W_{ij}$. The \emph{Laplacian} matrix $L$ of the network $G$ is defined by
$L_{ii}= d_i$ and $L_{ij}= -W_{ij}$. 
The Laplacian eigenvalues $0=\lambda_1,\lambda_2,\dots,\lambda_n$, usually sorted by magnitude, contain significant information about the network structure and its effect on the dynamical processes that could take place on it.

\alex{For undirected networks $W_{ij}=W_{ji}$, the symmetry of $L$ implies that all eigenvalues are real. In this case,} the multiplicity of $\lambda_1=0$ is equal to the number of connected components - separate parts - of the network. 
The second and last eigenvalues $\lambda_2$ and $\lambda_n$ provide bounds on the smallest and largest node degrees \cite{Fiedler}:
\begin{equation}\label{eq:bound_degrees}
d_{\min} \geq \frac{n-1}{n}\lambda_2 \hspace{1cm} d_{\max} \leq \frac{n-1}{n}\lambda_n.
\end{equation}
Finally, the leading eigenvectors of the Laplacian (e.g. Fiedler vector $v_2$) tend to have similar entries for nodes close to each other. They can be used to cluster the nodes. See for example \cite{von2007tutorial} for an introduction to such \emph{spectral clustering} techniques.

%
%
%
%

%

\alex{For large (directed) networks, it is more convenient to work with the spectral moments} of the Laplacian matrix
\begin{equation*}
\M_k = \frac{1}{n}\sum_{i=1}^n \lambda_i^k = \frac{1}{n} \tr (L^k)\,.
\end{equation*}
They are always real and also provide relevant information. The first moment is equal to the average node degree: $\M_1=\frac{1}{n}\sum_i d_i$. The following bounds on the quadratic mean degree were also shown in the Appendix of \cite{MauroyHendrickx:17}:
\begin{equation}
\label{eq:M2}
\max \prt{\M_1^2,\frac{\M^2_2}{2}}\leq \frac{1}{n}\sum_i (d_i)^2 \leq \M_2.
\end{equation}
We refer the interested reader to \cite{MauroyHendrickx:17, Fiedler, Preciado} for more detail.

\section{SPECTRAL NETWORK IDENTIFICATION}
\label{sec:eigenvalue}

\subsection{Problem statement}


We consider a network of $n$ identical units interacting through a diffusing coupling and forced by external inputs $u:\mathbb{R} \to \mathbb{R}^p$. Each unit is described by $m$ states $x_k \in \mathbb{R}^m$ which evolve according to the LTI dynamical system
\begin{eqnarray}
\label{eq:lin_dyn}
\dot{x}_k  & = & A x_k + B \sum_{j=1}^n w_{kj} (y_k-y_j) + D_k u(t) \\
y_k & = & C^T x_k
\end{eqnarray}
with $A\in \mathbb{R}^{m \times m}$, $B\in \mathbb{R}^{m}$, $C \in \mathbb{R}^{m}$, and $D_k \in \mathbb{R}^{m \times p}$. We assume that $A,B,C$ are identical for all units, but make no assumption on $D_k$, which can differ from one unit to another. For instance, the inputs could be applied to a small number of units. \revjh{We will refer to solutions of \eqref{eq:lin_dyn} as \emph{trajectories} of the system.}

Denoting by $X=[x_1 \dots x_n]^T \in \mathbb{R}^{mn}$ the vector which collects all the states, we obtain the dynamics
\begin{equation}
\label{eq:whole_syst}
\dot{X}=K X + D u(t)
\end{equation}
with $K = I_n \otimes A - L \otimes BC^T \in \mathbb{R}^{mn \times mn}$ and $D = [D_1,\cdots,D_n]^T \in \mathbb{R}^{mn \times p}$. The matrix $I_n$ is the $n \times n$ identity matrix, $L \in \mathbb{R}^{n \times n}$ is the Laplacian matrix and $\otimes$ denotes the Kronecker product.

In the context of spectral network identification, our goal is the following: assuming that the local dynamics of the units and the external inputs are known (i.e. $A$, $B$, $C$, and $u$ are known), obtain the Laplacian eigenvalues $\lambda_k$ from $q \ll n$ partial local measurements in the network. These measurements are time series of the form $Q X(kT) \in \mathbb{R}^q$, where the measurement matrix $Q \in \mathbb{R}^{q \times mn}$ is a sparse matrix and $T>0$ is the sampling time. Typically, we will consider direct measurements at a small number of units, i.e. $Q^T = [e_{k_1} \cdots e_{k_q}]$ for some $k_1,\dots,k_q \in\{1,\dots,mn\}$, where $e_k$ is the $k$th unit vector. Note that the inputs are known but they are applied to unknown locations in the network (the matrix $D$ is not be known).

\begin{assumption}[Zero-order hold assumption]
We make the standing assumption that the sampling time $T$ is small enough so that the input can be considered as constant over every sampling period $[kT,(k+1)T]$.
\end{assumption}
\begin{assumption}[Single experiment]
	We assume that we can measure only one trajectory of the system.
\end{assumption}

Our method to solve the above spectral identification problem is divided in two steps. (1) Eigenvalues of $K$ are estimated through the Dynamic Mode Decomposition (DMD) method---and in particular its extension to systems with inputs \cite{Proctor}. (2) Laplacian eigenvalues are obtained by exploiting a one-to-one correspondence with the spectrum of $K$ \cite{MauroyHendrickx:17}.

\subsection{Estimation of the eigenvalues of $K$}
In this section, we extract the eigenvalues of $K$ from measurement data points. We show how these eigenvalues are related to particular matrices obtained from measured data and we use the DMD method.

\paragraph{Spectrum of $K$} The method is based on the following idea. \jmhnew{For arbitrary trajectories of the system, \revjh{i.e. solutions of \eqref{eq:lin_dyn}}, we can define the vectors} $Z(k)\in \mathbb{R}^{qN}$ compiling the measurements and inputs at periodic intervals $T$, i.e.
\begin{equation*}
{\small
Z(k)=\begin{bmatrix} QX(kT) \\  QX((k+1)T)  \\ \vdots \\ QX((k+N-1)T) \end{bmatrix},\,
U(k)=\begin{bmatrix} u(kT) \\  u((k+1)T)  \\ \vdots \\ u((k+N-1)T) \end{bmatrix}}
\end{equation*}
There is a \jmhnew{general} linear relation between the vectors $Z(k)$, $U(k)$ and the vector $Z(k+1)$. Moreover, this linear relation is invariant under shift of all measurement times and input times by a constant multiple of $T$, and its spectrum is related to the spectrum of the system matrix $K$. This result is summarized in the following proposition.



\begin{proposition}
	\label{prop:Gamma}
	
	Suppose that $mn = Nq$ for some integer $N$ and that the square matrix
	$$
	O_N = \left[\begin{array}{c}
	Q e^{K(N-1)T}\\
	Q e^{K(N-2)T}\\
	\vdots\\
	Q
	\end{array} \right]
	$$ is full rank \jmhnew{ (which is generically the case, and implies observability of the pair $(K,Q)$)}. Then there exist unique matrices $\Gamma \in \mathbb{R}^{Nq\times Nq}$ and $\Upsilon \in \mathbb{R}^{Nq\times Np}$ such that
	\begin{equation}
	\label{eq:Gamma}
	Z(k+1) = \Gamma Z(k) + \Upsilon U(k)
	\end{equation}
\jmhnew{holds for all possible trajectories of the system.}
Moreover, $\mu$ is an eigenvalue of $K$ with the corresponding eigenvector $w$ if and only if $\tilde{\mu} = e^{\mu T}$ is an eigenvalue of $\Gamma$ with the corresponding eigenvector$\tilde{w} = M \otimes (Q w)$, where $M$ is a diagonal matrix with diagonal entries $[1 \ \mu \ \mu^2 \ \cdots \ \mu^{N-1}]$. \hfill $\diamond$
\end{proposition}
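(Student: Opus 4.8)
The plan is to reduce the whole construction to the one-step sampled dynamics of \eqref{eq:whole_syst} and then to identify $\Gamma$ as a similarity transform of $e^{KT}$.

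First I would use the zero-order hold assumption to integrate \eqref{eq:whole_syst} over one sampling period with $u$ held constant, obtaining the exact recursion $X((k+1)T) = F\,X(kT) + G\,u(kT)$ with $F = e^{KT}$ and $G = \bigl(\int_0^T e^{K\tau}\,d\tau\bigr)D$, hence, by iteration, $X((k+j)T) = F^{j}X(kT) + \sum_{l=0}^{j-1}F^{\,j-1-l}G\,u((k+l)T)$ along every trajectory. Substituting into the definition of $Z(k)$ gives $Z(k) = \bar O_N\,X(kT) + \mathcal{L}\,U(k)$, where $\bar O_N$ is $O_N$ with its block-rows reversed (so its $j$-th block-row is $QF^{j-1}$, and it is full rank exactly when $O_N$ is) and $\mathcal{L}$ is block lower triangular with blocks of the form $QF^{i}G$. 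Since $mn=Nq$, $\bar O_N$ is a \emph{square} invertible $mn\times mn$ matrix, so one may solve $X(kT) = \bar O_N^{-1}\bigl(Z(k) - \mathcal{L}\,U(k)\bigr)$.

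For existence in \eqref{eq:Gamma}, the $j$-th block of $Z(k+1)$ is $QX((k+j)T)$, which by the iteration above is an explicit linear function of $X(kT)$ and the first $j$ blocks of $U(k)$; replacing $X(kT)$ by the expression just found turns $Z(k+1)$ into a linear function of $Z(k)$ and $U(k)$, and comparing block-rows yields $\Gamma = \bar O_N\,F\,\bar O_N^{-1}$ together with the corresponding $\Upsilon$; this holds along every trajectory because the solution formula does. For uniqueness I would note that, ranging over all trajectories, $(Z(k),U(k))$ attains every value in $\mathbb{R}^{Nq}\times\mathbb{R}^{Np}$: given a target $(z,v)$, take the trajectory with state $\bar O_N^{-1}(z-\mathcal{L}v)$ at time $kT$ and input $v$ on $[kT,(k+N-1)T]$. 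Hence two pairs satisfying \eqref{eq:Gamma} for all trajectories agree as linear maps on all of $\mathbb{R}^{Nq}\times\mathbb{R}^{Np}$, so they coincide.

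Finally, the eigenstructure drops out of $\Gamma = \bar O_N\,e^{KT}\,\bar O_N^{-1}$, i.e.\ $\Gamma$ is similar to $e^{KT}$ through $\bar O_N$. If $Kw=\mu w$ with $w\neq 0$, then $e^{KT}w = e^{\mu T}w$, so $\Gamma(\bar O_N w) = e^{\mu T}(\bar O_N w)$, with $\bar O_N w\neq 0$ by invertibility and $j$-th block $QF^{j-1}w = e^{(j-1)\mu T}Qw$, i.e.\ the vector $\tilde w$ of the statement. Conversely, an eigenvector $\tilde w$ of $\Gamma$ for $\tilde\mu$ gives an eigenvector $\bar O_N^{-1}\tilde w$ of $e^{KT}$ for $\tilde\mu$, whence $\tilde\mu = e^{\mu T}$ for some eigenvalue $\mu$ of $K$ by the spectral mapping theorem, and a short argument on the generalized eigenspaces of $K$ promotes $\bar O_N^{-1}\tilde w$ to an eigenvector of $K$ provided $\mu\mapsto e^{\mu T}$ is injective on $\mathrm{spec}(K)$. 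I expect the only real subtleties to be (i) the uniqueness step, which relies on every $(Z(k),U(k))$ being realizable by a trajectory and hence on the exact zero-order hold form of the one-step map, and (ii) the converse eigenvector claim, which needs aliasing of the eigenvalues of $K$ under sampling to be excluded — which holds under the standing small-$T$ assumption, and generically.
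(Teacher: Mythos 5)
Your proof is correct, and it reaches the same destination as the paper by a cleaner route. The paper first establishes a discrete-time lemma (Lemma~\ref{lemma_Julien1}) that builds $\Gamma$ explicitly in companion/shift form, with top block row $QK^N O_N^{-1}$, and then verifies the eigenvalue--eigenvector correspondence by direct computation on that structure; uniqueness is a separate lemma (Lemma~\ref{lemma_Julien2}) using zero-input trajectories to kill $\Delta_\Gamma$ and arbitrary inputs to kill $\Delta_\Upsilon$. You instead observe that $Z(k)=\bar O_N X(kT)+\mathcal L\,U(k)$ and hence $\Gamma=\bar O_N e^{KT}\bar O_N^{-1}$, so that the entire spectral claim follows in one line from similarity; this is the same $\Gamma$ (one can check that conjugating $e^{KT}$ by the observability matrix reproduces the paper's shift structure, as it must by uniqueness), but the similarity viewpoint makes the eigenvector formula $\tilde w=\bar O_N w$ with blocks $\tilde\mu^{\,j-1}Qw$ immediate rather than something to verify. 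Your uniqueness argument --- realizability of every pair $(z,v)$ as $(Z(k),U(k))$ of some trajectory --- is equivalent to the paper's and equally valid. Two remarks. First, you are right that the ``only if'' direction of the eigenvector claim (an eigenvector of $\Gamma$, equivalently of $e^{KT}$, being an eigenvector of $K$) requires excluding aliasing, i.e.\ injectivity of $\mu\mapsto e^{\mu T}$ on the spectrum of $K$; the paper's proof silently makes the same step (``$\tilde w$ is an eigenvector of $\Gamma$, where $w$ is an eigenvector of $\tilde K$ \emph{and} $K$''), so you have not lost anything relative to it --- if anything you have made an implicit hypothesis explicit. Second, a cosmetic point: the statement writes the eigenvector blocks with powers of $\mu$ whereas your derivation (and the paper's appendix lemma, applied to $\tilde K=e^{KT}$) gives powers of $\tilde\mu=e^{\mu T}$; your version is the consistent one.
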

The proof is given in the Appendix. 

According to Proposition \ref{prop:Gamma}, we can thus focus on the eigendecomposition of $\Gamma$ to obtain the spectrum of $K$. This is done through the DMD method for control.

\paragraph{Data matrices and DMD for control \cite{Proctor}} 

\jmhnew{We now turn our attention to the computation of $\Gamma$ from our observations of one trajectory. Under the conditions of Proposition \ref{prop:Gamma}, $\Gamma$ and $\Upsilon$ are the unique matrices to satisfy \eqref{eq:Gamma} for all trajectories of the system. Hence the idea of the method is to require our estimates of $\Gamma$ and $\Upsilon$ to satisfy $\eqref{eq:Gamma}$ for all available data-points. The obtained system of equations will admit (under the conditions of Proposition \ref{prop:Gamma}) $\Gamma,\Upsilon$ as unique solutions, unless the trajectory lies in a specific proper subspace. In this latter case, the system would be underdetermined (e.g. $u\equiv 0$, $X\equiv 0 $ for the most trivial example), which is not a generic situation.}

From a practical point of view, we define the data matrices $\bar{Z} = [Z(0) \ Z(1) \ \cdots] $, $\bar{Z}' = [Z(1) \ Z(2) \ \cdots]$, and $\bar{U} = [U(0) \ U(1) \ \cdots]$.
\jmh{The width of these matrices should be taken as large as the data collected allows.}
Following the original DMD method for input,  we rewrite \eqref{eq:Gamma} as
$\bar{Z}' =  [\Gamma \  \Upsilon] \begin{bmatrix}\bar{Z} \\ \bar{U} \end{bmatrix}$. 
Then the unknown matrices are given by
\begin{equation}
\label{gamma_pseudo}
[\Gamma \  \Upsilon] = \bar{Z}' \begin{bmatrix}\bar{Z} \\ \bar{U} \end{bmatrix}^\dagger
\end{equation}
where ${}^\dagger$ denotes the Moore-Penrose pseudoinverse. The DMD method therefore consists of (1) computing $\Gamma$, obtained with the $Nq$ first columns in \eqref{gamma_pseudo}, and (2) computing the eigenvalues and eigenvectors of $\Gamma$. We note that the pseudoinverse is generally not used, and is replaced by a singular value decomposition (see \cite{Proctor} for more details).

\begin{remark}
		1. \jmhnew{In practice, we will prefer to define columns of the data matrices with larger time steps $\Delta$ than the sampling time $T$, so that $Z(k)$ would contain $QX(kT), QX(kT+\Delta), QX(kT+2\Delta),\dots$ (where $\Delta$ is a multiple of $T$). Numerical tests suggest indeed that the spectrum of $\Gamma$ obtained in this case still coincides with the spectrum of $K$, and the algorithm actually yields more accurate results.} \\ 
		2. The number $N$ chosen to define the vectors $Z(k)$ could be different from the condition $Nq=nm$ imposed in Proposition 1. If $Nq>nm$, then the spectrum of $\Gamma$ will coincide with the spectrum of $K$, but will contain additional zero eigenvalues. If $Nq<nm$ (which is typically the case for the values $N$ chosen with large networks), then the two spectra are different, but our numerical simulations show that the eigenvalues of $\Gamma$ lie near clusters of eigenvalues of $K$ (see Section \ref{sec:large_moments}).
\end{remark}

\subsection{Estimation of Laplacian eigenvalues \jmhnew{and eigenvectors}}

Now we infer the Laplacian eigenvalues from the eigenvalues of the matrix $K$. We rely on previous works, which show that there exists a bijection between the spectra of $L$ and $K$ \cite{MauroyHendrickx:17}. In particular, we use the following result.
\begin{proposition}[{\cite[Proposition 2]{MauroyHendrickx:17}}]
	\label{prop:net_ident}
	If $\mu$ is an eigenvalue of $K$ and is not an eigenvalue of $A$, then
	\begin{equation}
	\label{eq:Laplacian_eigenval}
	\lambda = \frac{1}{C^T(A-\mu I_m)^{-1}B}
	\end{equation}
	is a Laplacian eigenvalue. \hfill $\diamond$
\end{proposition}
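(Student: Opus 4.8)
The plan is to decouple the large eigenvalue problem $Kv=\mu v$ into $n$ small $m\times m$ problems, one for each Laplacian eigenvalue, and then exploit the rank-one structure of $BC^T$ through the matrix determinant lemma.

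First I would triangularize $L$: write $L = P T P^{-1}$ with $T$ upper triangular and diagonal entries $\lambda_1,\dots,\lambda_n$ (a Schur decomposition; if one is willing to assume $L$ diagonalizable, one may take $T$ diagonal, but this is not needed and we work over $\mathbb{C}$ to accommodate the directed case). Since $K = I_n\otimes A - L\otimes BC^T$ and the Kronecker product is compatible with similarity, conjugating by $P\otimes I_m$ gives $(P^{-1}\otimes I_m)\,K\,(P\otimes I_m) = I_n\otimes A - T\otimes BC^T$, which is block upper triangular with diagonal blocks $A-\lambda_i BC^T$, $i=1,\dots,n$. Hence $\sigma(K) = \bigcup_{i=1}^n \sigma(A-\lambda_i BC^T)$.

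Next, suppose $\mu\in\sigma(K)$ with $\mu\notin\sigma(A)$. By the previous step, $\mu$ is an eigenvalue of $A-\lambda_i BC^T$ for some $i$, i.e. $\det(A-\mu I_m-\lambda_i BC^T)=0$. As $A-\mu I_m$ is invertible, factoring it out yields $\det(A-\mu I_m)\det\!\big(I_m-\lambda_i (A-\mu I_m)^{-1}BC^T\big)=0$, hence $\det\!\big(I_m-\lambda_i (A-\mu I_m)^{-1}BC^T\big)=0$. Because $BC^T$ has rank one, the matrix determinant lemma ($\det(I+uv^T)=1+v^Tu$ with $u=-\lambda_i(A-\mu I_m)^{-1}B$, $v=C$) collapses this to the scalar identity $1-\lambda_i\,C^T(A-\mu I_m)^{-1}B=0$. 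Therefore $\lambda_i\,C^T(A-\mu I_m)^{-1}B=1$; in particular $C^T(A-\mu I_m)^{-1}B\neq 0$, so the right-hand side of \eqref{eq:Laplacian_eigenval} is well defined, and $\lambda=1/\big(C^T(A-\mu I_m)^{-1}B\big)=\lambda_i$, a Laplacian eigenvalue, as claimed.

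The computation itself is short. The only points requiring care are (i) avoiding a diagonalizability assumption on $L$, which the Schur-form step handles via the resulting block-triangular structure of $K$, and (ii) the passage from an $m\times m$ determinant to a scalar equation, which is exactly where the rank-one structure of the coupling matrix $BC^T$ enters. I do not expect a genuine obstacle beyond keeping track of these two points.
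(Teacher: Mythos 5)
Your proof is correct. Note that this paper does not actually prove Proposition~\ref{prop:net_ident}: it is imported verbatim from \cite{MauroyHendrickx:17} and the proof is deferred to that reference, so your argument is, if anything, more than the paper supplies. The route taken in the cited work (visible in this paper through its later use of \cite[Lemma~1]{MauroyHendrickx:17}, in the proof of Proposition~\ref{prop:eigenvec_L}) is eigenvector-based: one shows that eigenvectors of $K$ have the Kronecker structure $w = v \otimes v_A$ with $Lv=\lambda v$ and $(A-\lambda BC^T)v_A=\mu v_A$, and then extracts \eqref{eq:Laplacian_eigenval} by left-multiplying $(A-\mu I_m)v_A=\lambda B C^T v_A$ by $C^T(A-\mu I_m)^{-1}$ and cancelling the scalar $C^Tv_A$ (which is nonzero precisely because $\mu\notin\sigma(A)$). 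Your version reaches the same scalar identity $\lambda\, C^T(A-\mu I_m)^{-1}B=1$ through determinants instead: the Schur/block-triangularization step giving $\sigma(K)=\bigcup_i \sigma(A-\lambda_i BC^T)$ is the same decoupling that underlies the eigenvector lemma, but by working with characteristic polynomials and the rank-one update $\det(I+uv^T)=1+v^Tu$ you sidestep any diagonalizability concerns for $L$ and never need to argue that $C^Tv_A\neq 0$; the case $\lambda_i=0$ is also automatically excluded since it would force $\mu\in\sigma(A)$. The trade-off is that the eigenvector route additionally produces the eigenvector correspondence $w=v\otimes v_A$, which the paper needs anyway for Proposition~\ref{prop:eigenvec_L}, whereas your determinant argument yields only the eigenvalue map. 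Both are sound; yours is a legitimately different and slightly more robust derivation of the eigenvalue formula alone.
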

The proof can be found in \cite{MauroyHendrickx:17}. \alex{We note that $m$ distinct eigenvalues of $K$ yield the same Laplacian eigenvalue through \eqref{eq:Laplacian_eigenval}, but all Laplacian eigenvalues are obtained when the full spectrum of $K$ is considered.}

\jmh{We now show that we can also retrieve some information on the Laplacian eigenvectors, which will prove useful to apply certain clustering methods, see \ref{subsec:clustering}.}
\begin{proposition}
	\label{prop:eigenvec_L}
	Assume that $Q=Q_1 \otimes Q_2$, for some $Q_1 \in \mathbb{R}^{q_1 \times n}$ and $Q_2 \in \mathbb{R}^{q_2 \times m}$ (with $q=q_1 \, q_2$) \alex{and consider the eigenvectors $\tilde{w}$ and $v$ of $\Gamma$ and $L$, associated with the eigenvalues $\tilde{\mu}=e^{\mu T}$ and $\lambda$, respectively, where $\lambda$ and $\mu$ are related through \eqref{eq:Laplacian_eigenval}. Then,}
	\begin{equation}
	\label{eq:ratios}
	 \frac{[Q_1 v]_i}{[Q_1 v]_j} = \frac{[\tilde{w}]_{l_1 q+(i-1)q_1 + l_2}}{[\tilde{w}]_{l_1 q+(j-1)q_1 + l_2}} \quad \,,
	\end{equation}
	where $[\ ]_i$ denotes the $i$th component of the vector, with $i,j=1,\dots, q_1$, $l_1 = 0,\dots,N-1$, and $l_2 = 1,\dots,q_2$.
	 \hfill $\diamond$
\end{proposition}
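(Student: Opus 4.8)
The plan is to trace the tensor structure of the eigenvector $\tilde w$ of $\Gamma$ all the way down to an entry of $Q_1 v$, so that forming the ratio in \eqref{eq:ratios} cancels every factor except a ratio of two entries of $Q_1 v$. By Proposition~\ref{prop:Gamma}, $\tilde w = M \otimes (Qw)$, where $w$ is the eigenvector of $K$ associated with $\mu$ and $M = [1\ \mu\ \cdots\ \mu^{N-1}]^{T}$; hence it suffices to (i) pin down the structure of $w$, and then (ii) exploit the factorization $Q = Q_1 \otimes Q_2$.

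For step (i), I would show that if $v$ is a Laplacian eigenvector, $Lv = \lambda v$, then $w = v \otimes \xi$ with $\xi = (A - \mu I_m)^{-1}B$ (well defined since $\mu$ is not an eigenvalue of $A$, which also forces $\lambda \neq 0$) is an eigenvector of $K$ for the eigenvalue $\mu$. This is a short computation: $A\xi = (A-\mu I_m)\xi + \mu\xi = B + \mu\xi$ and $C^{T}\xi = C^{T}(A-\mu I_m)^{-1}B = 1/\lambda$ by \eqref{eq:Laplacian_eigenval}, so that $K(v\otimes\xi) = v\otimes(A\xi) - (Lv)\otimes(B C^{T}\xi) = v\otimes(B+\mu\xi) - (\lambda v)\otimes\big((1/\lambda)B\big) = v\otimes(B+\mu\xi) - v\otimes B = \mu\,(v\otimes\xi)$. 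This tensor structure of the $K$-eigenvectors is implicit in \cite{MauroyHendrickx:17}. (When $\lambda$ and $\mu$ are simple, $w$ is the eigenvector up to scaling; in the presence of multiplicities, $v$ and $\tilde w$ are to be read as matching vectors in the respective eigenspaces, and the argument below applies to each.)

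For step (ii), the mixed-product rule for the Kronecker product gives $Qw = (Q_1\otimes Q_2)(v\otimes\xi) = (Q_1 v)\otimes(Q_2\xi)$, hence $\tilde w = M\otimes(Q_1 v)\otimes(Q_2\xi)$ is a triple Kronecker product. Its entry corresponding to a triple $(l_1,i,l_2)$ — the index $l_1 q + (i-1)q_1 + l_2$ of \eqref{eq:ratios} — equals $\mu^{l_1}\,[Q_1 v]_i\,[Q_2\xi]_{l_2}$, for $l_1 = 0,\dots,N-1$, $i=1,\dots,q_1$, $l_2 = 1,\dots,q_2$. Dividing the $(l_1,i,l_2)$ entry by the $(l_1,j,l_2)$ entry, the common factors $\mu^{l_1}$ and $[Q_2\xi]_{l_2}$ cancel and $[Q_1v]_i/[Q_1v]_j$ remains, which is precisely \eqref{eq:ratios}.

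Once the factorization $w = v\otimes\xi$ is in hand, the rest is pure bookkeeping, so the only substantive ingredient is that factorization together with the identity $C^{T}\xi = 1/\lambda$ coming from \eqref{eq:Laplacian_eigenval} — this is where the network dynamics actually enter. The remaining points are purely formal: the ratios are meaningful only where the denominators do not vanish (one restricts implicitly to indices $j$ with $[Q_1 v]_j\neq 0$, and to $\mu\neq 0$ and $[Q_2\xi]_{l_2}\neq 0$), and the multiplicity caveat noted above must be kept in mind; none of this is deep, but it should be stated for completeness.
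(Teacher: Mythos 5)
Your proof is correct and follows essentially the same route as the paper: establish the tensor factorization of the $K$-eigenvector, combine it with Proposition~\ref{prop:Gamma} and the Kronecker mixed-product rule, and observe that all factors except $[Q_1 v]_i/[Q_1 v]_j$ cancel in the ratio. The only difference is that where the paper simply cites Lemma~1 of \cite{MauroyHendrickx:17} for the factorization $w = v\otimes v_A$, you derive it explicitly with $\xi=(A-\mu I_m)^{-1}B$ and the identity $C^T\xi = 1/\lambda$ from \eqref{eq:Laplacian_eigenval}, which makes the argument self-contained and in fact pins down the $\mathbb{R}^m$ factor more precisely than the paper's wording.
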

\alexb{In particular, when $Q^T=[e_{k_1} \, \cdots \, e_{k_q}]$, that is, we measure distinctly the same state at some nodes, the result can be applied with $[Q_1 v]_i=v_{k_i}$.}\\
\begin{proof}
It follows from {\cite[Lemma 1]{MauroyHendrickx:17}} that $w = v \otimes v_A$, where $v_A$ is an eigenvector of $A$. Then, Proposition \ref{prop:Gamma} implies that the eigenvector of $\Gamma$ is of the form
$\tilde{w} = \Lambda \otimes (Q_1 v \otimes Q_2 w)$
and \eqref{eq:ratios} directly follows.	
\end{proof}

\subsection{Algorithm}

Our method for spectral identification of networks with inputs is summarized in the following algorithm.
\begin{algorithm}[h]
	\caption{Spectral identification of networks with inputs}
	\label{alg:lifting}
	\begin{algorithmic}[1]
		\Statex{\bf Input:} Measurements $\{QX(kT)\}$; inputs values $\{u(kT)\}$; shift sequences parameters $N$ and $\Delta$.
		\Statex{\bf Output:} Laplacian eigenvalues $\lambda_k$ and ratios $[Q_1 v_k]_i/[Q_1 v_k]_j$.
		\State Construct the data matrices $\bar{Z}$, $\bar{Z}'$, and $\bar{U}$; 
		\State DMD for control (1): Compute the matrices $\Gamma$ and $\Upsilon$ using \eqref{gamma_pseudo};
		\State DMD for control (2): Compute the eigenvalues $\tilde{\mu}_k$ and eigenvectors $\tilde{w}_k$ of $\Gamma$;
		\State $\mu_k = 1/T \log \tilde{\mu}_k$;
		\State Compute the Laplacian eigenvalues $\lambda_k$ using \eqref{eq:Laplacian_eigenval};
		\State Compute the ratios $[Q_1 v_k]_i/[Q_1 v_k]_j$ using \eqref{eq:ratios}.
	\end{algorithmic}
\end{algorithm}

\subsection{Example}
\label{subsec:example}

We consider a directed random Erd{\H o}s-R\'enyi network with $n=15$ nodes and a probability $0.3$ for any two vertices to be connected. The weights of the edges are uniformly randomly distributed between $0$ and $5$. The dynamics is given by \eqref{eq:lin_dyn} with the (known) matrices
\begin{equation*}
A = \begin{bmatrix}  -1  &  -2 \\
	1  &  -1 \end{bmatrix} \quad 
B = \begin{bmatrix} 1 \\ 2 \end{bmatrix} \quad 
C =  \begin{bmatrix} 1 \\ 1 \end{bmatrix} 
\end{equation*}
and the two inputs are sinusoidal signals with random amplitude and frequency (between $0$ and $1$). They are applied on the second state of nodes $3$ and $4$, i.e.  $$D_3=\begin{bmatrix} 0 & 0 \\ 1 & 0 \end{bmatrix}, \hspace{.2cm}D_4=\begin{bmatrix} 0 & 0 \\ 0 & 1 \end{bmatrix},$$
and $D_k=0$ for $k\neq\{3,4\}$). We measure the first state of nodes $1$ and $2$, i.e. the measurement matrix is $Q^T=[e_1 \ e_2]$, on a single trajectory of the system (sampling time $T=0.001$ over the time interval $[0,2]$). Using the proposed spectral network identification method (with parameters $N=20$, $\Delta=0.03$), we can \alexb{obtain a good approximation} of almost all Laplacian eigenvalues (Figure \ref{fig:example1}). This provides a good estimate of the first spectral moments $\mathcal{M}_1$ and $\mathcal{M}_2$, and of \jmh{the mean node degrees  $\mathcal{D}_1 = \mathcal{M}_1$ and  mean quadratic node degrees $\mathcal{D}_2$} through \eqref{eq:M2} (Table \ref{table1}). Using \eqref{eq:ratios}, we can also compute the ratio $[v_2]_1/[v_2]_2$ between the first and second components of the Fiedler vector. \jmh{The fact that it is negative tends to indicate that nodes $1$ and $2$ are in \quotes{opposite parts} of the network, see Section \ref{subsec:clustering} for more detail} \alex{on the use of the eigenvectors in the context of clustering.}

Note that more accurate results could be obtained if several trajectories associated with different initial conditions were measured.


\begin{figure}[h]
	\centering
	\includegraphics[width=0.6\linewidth]{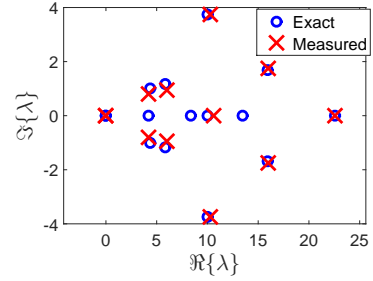}
	\caption{Almost all the Laplacian eigenvalues of a networks with $15$ nodes are estimated precisely with only two measurements (one state at two nodes).}
	\label{fig:example1}
\end{figure}

%
%
%
%

\begin{table}[h]
	\centering
\begin{tabular}{c|cccc}
	\hline 
 &$\mathcal{M}_1$	= $\mathcal{D}_1$ &   $\mathcal{M}_2$ & $\mathcal{D}_2$ & $v_2(1)/v_2(2)$
 \\
	\hline 
Exact& 	8.73 & 110.33 &  105.38 & -2.28 \\
Estimated &	8.85 & 117.88 & $\leq$117.88 & -2.06
\end{tabular}
\caption{}
\label{table1}
\end{table}

\section{APPLICATIONS TO LARGE NETWORKS}
\label{sec:large_moments}

In the case of large networks, it is typically impossible to infer the full spectrum of the Laplacian eigenvalues. In this situation, we rather focus on (1) the first spectral moments of the Laplacian matrix (mean node degree), (2) the algebraic connectivity $\lambda_2$ and the largest eigenvalue (in real part) $\lambda_n$ (minimum and maximum node degree), and (3) the leading spectral eigenvectors (clustering).

\subsection{Mean node degree}

\jmh{The mean node degree and mean quadratic node degree} can be obtained by considering the spectral moments of the Laplacian matrix. In the case of large networks, the Laplacian eigenvalues cannot be obtained exactly. However, we have observed experimentally that \jmhnew{they lie inside or near the convex hull $\mathcal{S}$ of a few values obtained when applying the method described in the previous section with $N \ll nm/q$.}
\jmh{We therefore estimate the spectral moments by computing the moments of area of this convex hull.}
In particular, we have
\begin{equation}
\label{area_moment_K}
\mathcal{M}_1 \approx \displaystyle  \frac{\int_{\mathcal{S}_j} x \, dx dy}{\int_{\mathcal{S}_j} dx dy}, \quad
\mathcal{M}_2 \approx \displaystyle \frac{\int_{\mathcal{S}_j} y^2-x^2 \, dx dy}{\int_{\mathcal{S}_j} dx dy}
\end{equation}
\jmh{with $(x,y)=(\Re\{\lambda\},\Im\{\lambda\})$.} 
We note that this result can be generalized to the case where several clusters of eigenvalues require to consider several convex hulls (see \cite{MauroyHendrickx:17} for more details).

In the following example, we consider a directed random network of $100$ nodes, with a normal distribution of the number of edges at each node (mean: $10$, standard deviation: $5$). The weights of the edges are uniformly randomly distributed between $0$ and $0.1$. The dynamics are the same as in Example \ref{subsec:example}, but with $5$ sinusoidal inputs (with random amplitude and frequency between $0$ and $1$) applied to $5$ different nodes. We measure one state of one node only (sampling time $T=0.02$ over the time interval $[0,20]$). Using our method for spectral identification of networks with inputs (with parameters $N=20$, $\Delta=0.2$), we can obtain a convex hull that approximates the location of the Laplacian eigenvalues (Figure \ref{moment_large}). This provides accurate values of the spectral moments, and therefore a fair estimate of the mean node degrees (Table \ref{table2}). \alex{Note that the error on the mean node degree ($0.02$) is very small compared to the standard deviation of the nodes degree distribution $\sqrt{\mathcal{D}_2-\mathcal{D}_1^2} = 0.26$.} We can also divide the (estimated) average node degree by the average value of the \jmh{edge} weights ($0.05$ in our case), and we obtain an estimation of the average number of edges per node.

\begin{figure}[h]
	\centering
	\includegraphics[width=0.7\linewidth]{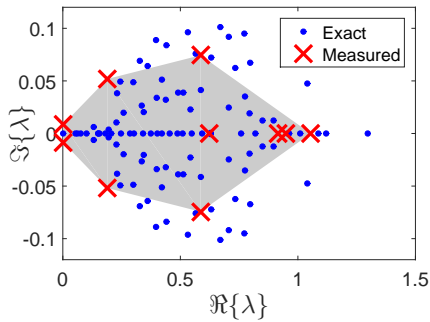}
	\caption{The convex hull of the estimated eigenvalues  approximates the location of the Laplacian eigenvalues.}
	\label{moment_large}
\end{figure}

%
%
%
%

\begin{table}[h]
	\centering
	\begin{tabular}{c|cccc}

& $\mathcal{M}_1=\mathcal{D}_1$ & avg. number of edges & $\mathcal{M}_2$ & $\mathcal{D}_2$\\ \hline
Exact &0.51 & 10.22 & 0.33 & 0.33\\
Estimated & 0.49 & 9.81 & 0.29 & $[0.26 , 0.29]$
	\end{tabular}
		\caption{}
	\label{table2}
\end{table}

\subsection{Minimum and maximum node degrees}

In this example, we focus on the Laplacian eigenvalues $\lambda_2$ and $\lambda_n$, which provide bounds on the minimum and maximum node degrees according to \eqref{eq:bound_degrees}, in the case of undirected networks. We consider the \jmhnew{undirected and unweighted} Polblogs network \cite{Adamic}. This network originally describes links between $1224$ US political blogs. We attach one state to each node, \jmhnew{representing the node opinion,} and consider a consensus dynamics of the form \eqref{eq:lin_dyn} with $A=-1$, $B=0.1$, and $C=1$. As an external influence, three sinusoidal inputs with random amplitude between $0$ and $0.1$, and random frequency between $0$ and $1$ are applied to the network. Each input directly affects one third of the nodes. We measure the state of node $1$, whose degree is $26$. \jmhnew{(Our sampling time is $T=0.01$ over the time interval $[0,10]$, and we use the parameters $N=50$, $\Delta=0.1$ in our method.)}


Using this single measurement, the spectral identification method provides a good estimate of the Laplacian spectrum range (Figure \ref{fig:min_max}). The eigenvalue $\lambda_2$ is overestimated, but this result still gives the order of magnitude of the minimum node degree. \jmh{Note that we actually estimate much more accurately the first eigenvalue  $\lambda_1=0$, which is unfortunately less interesting in this context}. More importantly, the estimate of \jmhnew{$\lambda_n$} is very accurate, a result which allows to obtain a very close bound for the maximum node degree (Table \ref{table3}). \jmhnew{With the measure of one node only, one can therefore deduce the presence of a highly connected node and estimate how influent it is. }

\begin{figure}[h]
	\centering
	\includegraphics[width=0.6\linewidth]{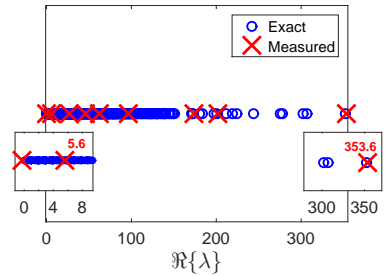}
	\caption{We obtain a good estimate of the range of the Laplacian spectrum, which provides bounds on the minimum and maximum node degrees. The two insets show the eigenvalues near $\lambda_2$ and near $\lambda_n$.}
	\label{fig:min_max}
\end{figure}

\begin{table}[h]
	\centering
	\begin{tabular}{c|cccc}

& $\lambda_2$& $\lambda_n$& $d_{\min}$& $d_{\max}$\\ \hline
Exact & 0.17 & 352.05 & 1 & 351\\
Estimated & 5.61 & 353.56 & $>5$ & $<353$
	\end{tabular}
	\caption{}
	\label{table3}
\end{table}

%
%
%
%

\subsection{Clustering}
\label{subsec:clustering}

We will now use the spectral identification framework in the context of clustering. Our approach is very similar to the technique proposed in \cite{Raak} to partition a power grid using the so-called Koopman mode analysis on measurements of the full network. Proposition \ref{prop:eigenvec_L} implies that we can compute the ratios between components of leading Laplacian eigenvectors if we use measurements at the corresponding nodes, a result which can be used to decide whether nodes are part of the same cluster.

We consider a \jmh{random graph with planted partitions, consisting of $3$ planted clusters of $50$ nodes each. Any two nodes of the same cluster are connected with independent probability $0.3$, but this probability drops to $0.05$ if the two nodes belong to different clusters.}
The weights of the edges are uniformly randomly distributed between $0$ and $0.1$. The dynamics are the same as in Example \ref{subsec:example}, with $3$ sinusoidal inputs with random amplitude and frequency (between $0$ and $1$), each of which is applied to one node in a distinct cluster. We measure one state for $5$ nodes in each cluster (sampling time $T=0.02$ over the time interval $[0,20]$) and use our method (with parameters $N=2$, $\Delta=0.2$). The ratios $[v_2]_j/[v_2]_1$ and $[v_3]_j/[v_3]_1$ (where $1$ and $j$ are measured nodes) are computed by using \eqref{eq:ratios} and are depicted in Figure \ref{fig:clustering}. The obtained result correctly retrieve the three clusters.

\begin{figure}[h]
	\centering
	\includegraphics[width=0.7\linewidth]{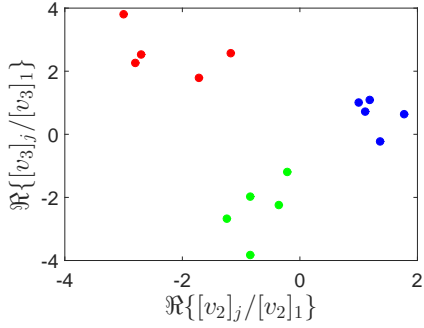}
	\caption{The ratios between component of the leading Laplacian eigenvector are computed with the spectral identification method and correctly infer the three clusters. Each dot corresponds to a ratio $[v_2]_j/[v_2]_1$ and $[v_3]_j/[v_3]_1$ and is colored according to the exact cluster of node $j$.}
	\label{fig:clustering}
\end{figure}

\section{CONCLUSIONS AND PERSPECTIVES}
\label{sec:conclusions}

We have extended our spectral network identification framework to allow for external inputs in the linear case. Crucially, this allows to \alex{consider networked systems that do not synchronize} and removes the need for using multiple trajectories. It is also relevant in applications where external inputs are present and cannot be avoided.
In this paper, we \jmh{have demonstrated the use of our framework} to estimate the mean, minimum, and maximum node degrees of the network, from measurements at a small number of nodes. Moreover, we have also shown that we can infer some information on the leading eigenvectors and use this result for network clustering.

One main challenge is the possible extension of these results to nonlinear interconnected systems that do not necessarily converge (unlike what was assumed in \cite{MauroyHendrickx:17}), to other sorts of coupling, or to networks with heterogeneous node dynamics. Some additional assumptions will be required in the latter case, since it was shown in \cite{MauroyHendrickx:17} that recovering the Laplacian spectrum from that of $K$ is in general impossible.

\alexb{Another line of potential further research concerns the optimization of the numerical algorithms. 
There might be more accurate ways of estimating the eigenvalues of the Laplacian based on those of $K$ when the latter are not fully accurate. 
Moreover, it is not clear that numerical methods are still efficient when (process or measurement) noise is added to the system. How numerical methods (and the underlying theoretical framework) should be adapted to properly handle noise is an open question.}


One could also explore the link with classical system identification. The eigenvalues of $K$ \jmh{defined in \eqref{eq:whole_syst}}
that we compute with the DMD algorithm are indeed (generically) also the poles of the transfer functions from the input to the measured nodes. One could thus in principle obtain them by identifying these transfer functions in a \quotes{classical} way, which would also naturally allow for noise in the system. However, classical identification techniques were designed for other challenges, and do not necessarily minimize the errors on the poles, so it is not clear that they would give better results, especially since the order of the transfer functions could be very large: \jmh{treating the example of Section \ref{subsec:example}, which involves 15 nodes and 2 states per node, would for instance require identifying a transfer function of order 30}.

Finally, we recall that our end-goal is to identify properties of the network from sparse measurements. Reconstructing the Laplacian spectrum is one way of attacking this problem, but not a goal on its own. There could be other relevant quantities containing information about the graph and that could be deduced from the trajectories. 

\appendix

\begin{proofof}\emph{Proposition \ref{prop:Gamma}}.
	
	We first prove two lemmas. Note that the proof is somewhat technical and given here for reviewing purpose. It will be omitted in the final version.
	\begin{lemma}
		\label{lemma_Julien1}
		Consider the discrete time-system 
		\begin{align}
		x(t+1) & = K x(t) + D u(t)\\
		z(t) & = Q x(t),
		\end{align}
		with $x\in \mathbb{R}^{nm}$, $K \in \mathbb{R}^{nm\times nm}$, $Q\in \mathbb{R}^{q\times nm}$ and $y(t)\in \mathbb{R}^q$, and suppose the pair $[Q,K]$ is observable.
		Suppose in addition that $mn = Nq$ for some integer $N$ and that 
		$$
		O_N = \left[\begin{array}{c}
		Q K^{N-1}\\
		Q K^{N-2}\\
		\vdots\\
		Q
		\end{array} \right]
		$$
		is full rank. Let 
		$$
		Z(t) = \left[\begin{array}{c}
		z(t) \\
		z(t-1)
		\vdots\\
		z(t-N+1)
		\end{array} \right], \hspace{.2cm} U(t) = \left[\begin{array}{c}
		u(t) \\
		u(t-1)
		\vdots\\
		u(t-N+1)
		\end{array} \right], 
		$$
		be a vector compiling the last $N$ observations. 
		Then, there exist unique matrices $\Gamma$ and $\Upsilon$ such that
		\begin{equation}\label{eq: Z=GZUU}
		Z(t+1) = \Gamma Z(t) + \Upsilon U(t),
		\end{equation}
		for all $t$.
		Moreover, $w$ is an eigenvector of $K$ with eigenvalue $\tilde{\mu} \neq 0$ if and only if
		$$
		\tilde{w}= 
		\prt{\begin{array}{c}
			\mu^{N-1}Qv\\
			\mu^{N-2}Qv\\
			\vdots\\
			Qv
			\end{array}
		}
		$$
		is an eigenvector of $\Gamma$ with the same eigenvalue $\tilde{\mu}$.
	\end{lemma}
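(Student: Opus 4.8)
The plan is to prove the stronger structural fact that $\Gamma$ is \emph{similar} to $K$ through the observability-type matrix $O_N$, from which both assertions of the lemma follow immediately. The starting point is the elementary reconstruction identity valid along any trajectory: setting $s=t-N+1$ and iterating $x(\tau+1)=Kx(\tau)+Du(\tau)$, the block $z(s+k)$ of $Z(t)$ equals $QK^{k}x(s)+\sum_{i=0}^{k-1}QK^{k-1-i}Du(s+i)$, so that
\begin{equation*}
Z(t)=O_N\,x(s)+H\,U(t)
\end{equation*}
for a fixed block-Toeplitz matrix $H$ of Markov parameters $QK^{i}D$, with vanishing first block-column since $z(t)$ does not depend on $u(t)$. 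Because $O_N$ is square and full rank, this solves for the base state: $x(s)=O_N^{-1}\bigl(Z(t)-H\,U(t)\bigr)$. In particular, along trajectories with $u\equiv 0$ one has $Z(t)=O_N x(s)$ and $Z(t+1)=O_N x(s+1)=O_N K x(s)$, and as the trajectory varies $x(s)$ (hence $Z(t)=O_N x(s)$, $O_N$ invertible) sweeps the whole space, so the only candidate is $\Gamma=O_N K O_N^{-1}$.

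For existence and uniqueness in general, I substitute the reconstruction identity into $Z(t+1)=O_N x(s+1)+H\,U(t+1)=O_N\bigl(Kx(s)+Du(s)\bigr)+H\,U(t+1)$ and re-express $u(s)$ and $U(t+1)$ through $U(t)$: the two stacked-input vectors share $N-1$ blocks, the extra block $u(t+1)$ is annihilated by the zero first block-column of $H$, and $u(s)$ is the last block of $U(t)$. This yields $Z(t+1)=\Gamma Z(t)+\Upsilon U(t)$ with $\Gamma=O_N K O_N^{-1}$ and a constant $\Upsilon=-\Gamma H+O_N D S_1+\hat H$ for suitable selection/shift matrices $S_1,\hat H$, valid along every trajectory. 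Uniqueness follows because the map sending a trajectory to the pair $(Z(t),U(t))$ is onto $\mathbb{R}^{Nq}\times\mathbb{R}^{Np}$ — given a target $(Z^\star,U^\star)$, take the inputs equal to $U^\star$ and the base state $x(s)=O_N^{-1}(Z^\star-HU^\star)$ — so two pairs $(\Gamma_i,\Upsilon_i)$ verifying the relation along all trajectories must coincide.

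The eigenvector statement is then pure linear algebra. Since $\Gamma=O_N K O_N^{-1}$, the map $v\mapsto O_N v$ is a linear isomorphism carrying $\ker(K-\mu I)$ onto $\ker(\Gamma-\mu I)$. If $Kv=\mu v$ then $O_N v=(QK^{N-1}v,\dots,QKv,Qv)^{T}=(\mu^{N-1}Qv,\dots,\mu Qv,Qv)^{T}=\tilde w$, which is therefore an eigenvector of $\Gamma$ with eigenvalue $\mu$; it is nonzero because $Qv=0$ together with $Kv=\mu v$ would give $QK^{j}v=\mu^{j}Qv=0$ for all $j$, contradicting observability. Conversely, any eigenvector $\tilde w$ of $\Gamma$ with eigenvalue $\mu$ equals $O_N v$ for $v:=O_N^{-1}\tilde w$, which is an eigenvector of $K$ with the same eigenvalue, and reading off the blocks of $O_N v$ shows $\tilde w$ has the stated form (the hypothesis $\mu\neq 0$ is not actually needed here).

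I expect the only genuinely fiddly part to be the index bookkeeping in the reconstruction identity — pinning down the summation ranges and checking exactly which inputs enter $Z(t)$ versus $Z(t+1)$, so that the overlap of $U(t)$ with $U(t+1)$ and the zero first block-column of $H$ are handled correctly. Once the similarity $\Gamma=O_N K O_N^{-1}$ is established, everything else is routine.
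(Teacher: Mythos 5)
Your proof is correct, and it reaches the same matrices as the paper but by a different and arguably cleaner route. The paper builds $\Gamma$ block by block in companion form --- the last $(N-1)q$ rows are a pure shift, and the first $q$ rows are obtained as $QK^NO_N^{-1}$ after reconstructing $\tilde x=x(t-N+1)=O_N^{-1}Z(t)-M_1U(t)$ --- and then verifies the eigenvector correspondence by a two-directional computation exploiting that companion structure; uniqueness is deferred to a separate lemma (the paper's Lemma 2), which uses zero-input trajectories to force $\Delta_\Gamma O_N x_0=0$ and then arbitrary inputs to kill $\Delta_\Upsilon$. You instead isolate the reconstruction identity $Z(t)=O_Nx(s)+HU(t)$ once and for all, and package the conclusion as the similarity $\Gamma=O_NKO_N^{-1}$ (one can check that the paper's companion matrix is exactly this: $QK^{N-1-j}O_N^{-1}$ is the $j$-th block selector). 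This buys you the eigenvector statement for free as a corollary of similarity --- including the observation, correct, that $\mu\neq 0$ is not needed --- and it folds uniqueness into the same surjectivity argument (the map from trajectories to pairs $(Z(t),U(t))$ is onto, by choosing $U^\star$ and $x(s)=O_N^{-1}(Z^\star-HU^\star)$), which is a mild strengthening of the paper's argument. The index bookkeeping you flag as the fiddly part checks out: $z(t)$ indeed does not depend on $u(t)$ (zero first block-column of $H$), the extra block $u(t+1)$ of $U(t+1)$ is annihilated by that zero column, and $u(s)=u(t-N+1)$ is the last block of $U(t)$, so $\Upsilon$ is a fixed matrix as claimed.
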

	\begin{proof}
		We first prove \eqref{eq: Z=GZUU}. The last $(N-1)q$ lines of the equations follow directly from the definition of $\Gamma$ and the 0 lines in $\Upsilon$, so we focus on the $q$ first ones, which are
		$$
		z(t+1) = QK^N O_N^{-1} Z(t) + \tilde \Upsilon U(t),
		$$
		with $\tilde \Upsilon$ the first $q$ rows of $\Upsilon$.
		Let $\tilde x = x(t-N+1)$. Observe that 
		$$
		Z(t) = \left( 
		\begin{array}{c} 
		QK^{N-1}\tilde x + Q\sum_{k=1}^{N-1} K^{k-1}Du(t-N+k) \\
		\vdots
		\\
		QK\tilde x + QDu(t-N+1) 
		\\
		Q\tilde x
		\end{array}
		\right)
		$$
		Therefore, it follows that
		$$
		\tilde x = O_N^{-1}Z(t) - M_1 U(t),
		$$
		for some matrix $M_1$. Since $x(t+1) = Kx(t) + Du(t) = K(K^{N-1}\tilde x +\sum_{k=1}^{N-1} K^{k-1}Du(t-N+k)) + Du(t)$, this implies
		$$
		z(t+1) = Qx(t+1) = QK^N O_{N}^{-1}Z(t) + \tilde \Upsilon U(t)
		$$
		for some $\tilde \Upsilon$.
		
		Let us now move to the eigenvalues. Due to the structure of $\Gamma$, the equation $\Gamma w = \mu w$ is satisfied if and only if (i) $w^T = (\mu^{N-1} z^T, \mu^{N-2}z^T,\dots, z^T)^T$ for some $z$ of dimension $q$, and (ii)
		\begin{equation}\label{eq:condition_eigenvalue}
		\mu^{N}z^T = QK^NO_N^{-1} (\mu^{N-1} z^T, \mu^{N-2}z^T,\dots, z^T)^T.
		\end{equation}
		Hence $\mu$ is an eigenvalue of $\Gamma$ if and only if \eqref{eq:condition_eigenvalue} is satisfied for some $z$.
		
		Suppose first that $\mu$ is an eigenvalue of $K$ with eigenvector $x$, and let $z= Qx$. Observe that 
		
		$QK^NO_N^{-1}(\mu^{N-1} z^T, \mu^{N-2}z^T,\dots, z^T)^T$
		\begin{align*}
		&= QK^NO_N^{-1}(\mu^{N-1} (Q x)^T, \mu^{N-2}(Qx)^T,\dots, (Q x)^T)^T\\
		&= QK^NO_N^{-1}((Q K^{N-1} x)^T, (QK^{N-2}x)^T,\dots, (Q x)^T)^T\\
		&=QK^NO_N^{-1} O_N x\\
		&=Q K^Nx  =Q\mu^Nx,  
		\end{align*}
		so \eqref{eq:condition_eigenvalue} is satisfied, and $\mu$ is an eigenvalue of $\Gamma$.
		
		
		Suppose now on the other hand that $\mu$ is an eigenvalue of $\Gamma$ so that \eqref{eq:condition_eigenvalue} is satisfied for some $z$, and define 
		\begin{equation}\label{eq:defx}
		x = O_N^{-1} (\mu^{N-1}z^T, \mu^{N-2}z^T,\dots z^T)^T.
		\end{equation}
		Multiplying this equation by $O_N$ leads to $QK^kx = \mu^k z$ for every $k=0,\dots, N-1$. 
		It also follows from \eqref{eq:condition_eigenvalue} that 
		\begin{align*}
		\mu^N z^T &= QK^NO_N^{-1} (\mu^{N-1} z^T, \mu^{N-2}z^T,\dots, z^T)^T\\
		&= QK^Nx.
		\end{align*}
		Observe now that from \eqref{eq:defx}
		\begin{align*}
		\mu x &= O_N^{-1} (\mu^{N}z^T, \mu^{N-1}z^T,\dots,\mu z^T)^T\\
		& = O_N^{-1} ((QK^{N}x^)T, QK^{N-1}x^T,\dots,QK x^T)^T\\
		& = O_N^{-1} ((QK^{N-1}(Kx))^T, (QK^{N-2}Kx)^T,\dots,(Q (K x))^T)^T \\
		&= O_N^{-1} O_N (Kx) \\
		&= Kx,
		\end{align*}
		hence $x$ is an eigenvector of $K$ with eigenvalue $\mu$.
		
	\end{proof}
	
	\begin{lemma}
		\label{lemma_Julien2}
		\jmhnew{If $O_N$ is invertible, then $\Gamma, \Upsilon$ are the unique matrices for which \eqref{eq: Z=GZUU} holds for all possible trajectories. }
		\end{lemma}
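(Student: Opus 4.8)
The plan is to deduce uniqueness directly from Lemma \ref{lemma_Julien1} by showing that the pairs $(Z(t),U(t))$ arising from trajectories are not confined to any proper subspace of $\mathbb{R}^{Nq}\times\mathbb{R}^{Np}$. First I would suppose that $\Gamma',\Upsilon'$ is another pair of matrices for which \eqref{eq: Z=GZUU} holds along every trajectory. Subtracting the two identities gives $(\Gamma-\Gamma')Z(t)+(\Upsilon-\Upsilon')U(t)=0$ for all $t$ and all trajectories, so it suffices to prove that the only pair of matrices $[P\ R]$ satisfying $PZ(t)+RU(t)=0$ along every trajectory is $[0\ 0]$; that is exactly the assertion that the achievable pairs $(Z(t),U(t))$ span $\mathbb{R}^{Nq}\times\mathbb{R}^{Np}$.

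Second, I would reuse the explicit description of $Z(t)$ obtained in the proof of Lemma \ref{lemma_Julien1}. Writing $\tilde x = x(t-N+1)$, that computation shows $Z(t)=O_N\tilde x + M\,U(t)$ for a fixed matrix $M$ (the dependence on $u(t)$ is absent, but one may simply put a zero block there), while $U(t)$ is the stack of $u(t),\dots,u(t-N+1)$, which may be prescribed arbitrarily in $\mathbb{R}^{Np}$ since $u$ is a free input. Hence, given any target $(Z^\star,U^\star)\in\mathbb{R}^{Nq}\times\mathbb{R}^{Np}$, I would choose the inputs on the window $[t-N+1,t]$ so that $U(t)=U^\star$, and then, because $O_N$ is square and invertible, set $\tilde x = O_N^{-1}(Z^\star - M U^\star)$ and propagate the dynamics forward; the resulting trajectory realizes $(Z(t),U(t))=(Z^\star,U^\star)$.

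Finally, applying the identity $PZ(t)+RU(t)=0$ to these trajectories yields $PZ^\star+RU^\star=0$ for every $(Z^\star,U^\star)$; letting $Z^\star$ run over the standard basis with $U^\star=0$ forces $P=0$, and then letting $U^\star$ run over the standard basis forces $R=0$. Therefore $\Gamma'=\Gamma$ and $\Upsilon'=\Upsilon$. The only step I expect to need genuine care — and the main obstacle — is the surjectivity claim of the second paragraph: one must be sure that $\tilde x$ together with the $N$ input samples on the window are truly independent free parameters of a trajectory, and that $O_N$ being full rank (hence, being square, invertible) is precisely what lets $\tilde x$ absorb the residual $Z^\star - M U^\star$. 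Everything else is bookkeeping already carried out in Lemma \ref{lemma_Julien1}.
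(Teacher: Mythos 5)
Your argument is correct and follows essentially the same route as the paper: both proofs subtract the two candidate relations and exploit that $Z(t) = O_N \tilde x + M\,U(t)$ with $O_N$ invertible and the input window freely choosable, so that the achievable pairs $(Z(t),U(t))$ span the whole space. The paper merely packages this in two stages --- first a zero-input trajectory giving $Z(N-1)=O_N x_0$ to force $\Delta_\Gamma = 0$, then arbitrary inputs to force $\Delta_\Upsilon = 0$ --- which is exactly your specialization to $U^\star = 0$ followed by varying $U^\star$.
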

	\begin{proof}
		Suppose there exist $\Gamma',\Upsilon'$ such that 
		$$
		Z(t+1) = \Gamma' Z(t) + \Upsilon' U(t).
		$$
		for every trajectories. Then there would hold
		\begin{equation}\label{eq:DGDU=0}
		0 = \Delta_\Gamma Z(t) + \Delta_\Upsilon U(t)
		\end{equation}
		for every trajectory with $\Delta_\Gamma = \Gamma'-\Gamma$ and $\Delta_\Upsilon = \Upsilon'-\Upsilon$. Let us consider a trajectory $\tilde x$ leaving from $\tilde x(0) = x_0$, with zero input at and after 0, so that $U(N-1) = 0$.
		There holds $\tilde x(t) = K^t x_0$, hence 
		\begin{align*}
		Z(N-1) & = ((QK^{N-1}x_0)^T,(QK^{N-2}x_0)^T,\dots Qx_0 )^T \\
		& = O_N x_0.
		\end{align*}
		It follows from \eqref{eq:DGDU=0} and $U(N-1)=0$ that
		$$
		0 = \Delta_\Gamma Z(N-1) = \Delta_\Gamma O_N x_0
		$$
		for every $x_0$. Since $O_N$ is invertible, this implies $\Delta_\Gamma = 0$. \jmhnew{Hence \eqref{eq:DGDU=0} becomes $\Delta_\Upsilon U(t)=0$ for all possible trajectories. Since $U(t)$ can be arbitrarily selected, this implies $\Delta_\Upsilon = 0$, and we have thus $\Gamma'=\Gamma$ and $\Upsilon'=\Upsilon$}
		
		
	\end{proof}
	
	\emph{Proof of Proposition \ref{prop:Gamma}}
	
	We note that the solution to \eqref{eq:whole_syst} is given by
	\begin{equation*}
	X(t) = e^{Kt} X(0) + \int_0^T e^{K(t-\tau)} D u(\tau) d\tau
	\end{equation*}
	and, under the zero-order hold assumption, we obtain 
	\begin{equation}
	\label{eq:tilde_K}
	X(kT) = \tilde{K} X((k-1)T) + \tilde{D} u(kT)
	\end{equation}
	with $\tilde{K}=e^{KT}$ and $\tilde{D}=K^{-1}(e^{KT}-I) D$. Given the assumptions, Lemmas \ref{lemma_Julien1} and \ref{lemma_Julien2} can be applied to the system \eqref{eq:tilde_K}. This implies that $\Gamma$ exists and is unique. Moreover, its eigenvalue $\tilde{\mu}$ is also an eigenvalue of $\tilde{K}$, so that $\tilde{\mu}= e^{T \mu}$ where $\mu$ is an eigenvalue of $K$. Similarly, $\tilde{w}$ is an eigenvector of $\Gamma$, where $w$ is an eigenvector of $\tilde{K}$ and $K$. This concludes the proof.

\end{proofof}

\bibliographystyle{plain}

\begin{thebibliography}{100}
	
	\bibitem{Adamic}
	L. A. Adamic and N. Glance, The political blogosphere and the 2004 US Election, {\em Proceedings of the WWW-2005 Workshop on the Weblogging Ecosystem} (2005).
	
	\bibitem{Chiuso}
	A. Chiuso and G. Pillonetto, A. Chiuso and G. Pillonetto,
	{\em Automatica}, 48 (2012), pp.~1553--1565
	
	\bibitem{Dankers}
	A. Dankers, P.M.J. Van den Hof, X. Bombois, and P.S.C. Heuberger., {Identification of dynamic models in complex networks with prediction error methods: Predictor input selection}, {\em IEEE
	Transactions on Automatic Control}, 61(4), (2016), pp.~937--952.
	
	
	\bibitem{Fiedler}
	M.~Fiedler, Algebraic connectivity of graphs, {\em Czechoslovak
	mathematical journal}, 23 (1973), pp.~298--305.
	
	\bibitem{Franceschelli}
	M.~Franceschelli, A.~Gasparri, A.~Giua, and C.~Seatzu, 
	  {Decentralized estimation of Laplacian eigenvalues in multi-agent systems},
	  {\em Automatica}, 49 (2013), pp.~1031--1036.
			
	\bibitem{Gevers}
	M. Gevers, A. Bazanella, and A. Parraga,  {On the identifiability of dynamical networks}, {\em To appear in the Proceedings of IFAC World Congress}, 2017.
	
	
	\bibitem{net_ident_opti}
	T.~He, X.~Lu, X.~Wu, J.~Lu, and W.~X. Zheng,  Optimization-based
	  structure identification of dynamical networks, {\em Physica A: Statistical
	  Mechanics and Its Applications}, 392 (2013), pp.~1038--1049.
	
	\bibitem{MauroyHendrickx:17}
	 A. Mauroy and J.~M. Hendrickx, Spectral identification of networks using sparse measurements, {\em SIAM Journal of Applied Dynamical Systems},  (2017), pp.~479-513.
	
	
	\bibitem{Sauer_net_ident}
	 D.~Napoletani and T.~D. Sauer, Reconstructing the topology of
		sparsely connected dynamical networks, {\em Physical Review E}, 77 (2008),
	p.~026103.
	
	\bibitem{Pikovsky_net_ident}
	 Z.~Levnaji{\'c} and A.~Pikovsky, Untangling complex dynamical
	  systems via derivative-variable correlations, {\em Scientific reports}, 4 (2014).
	
	\bibitem{Preciado}
	V.~M. Preciado, A.~Jadbabaie, and G.~C. Verghese, {Structural
			analysis of Laplacian spectral properties of large-scale networks}, {\em IEEE
	Transactions on Automatic Control}, 58 (2013), pp.~2338--2343.
	
	\bibitem{Proctor}
	J. L. Proctor, S. L. Brunton, and J. N. Kutz, Dynamic mode decomposition with control, {\em SIAM Journal on Applied Dynamical Systems}, 15(1), (2016), pp.~142--161
	
	\bibitem{Raak}
	 F.~Raak, Y.~Susuki, and T.~Hikihara, {Data-driven partitioning of
	  power networks via Koopman mode analysis}, {\em IEEE Transactions on Power
	  Systems}, 31, (2016), pp.~2799-2808
	
	\bibitem{Ren_net_ident}
	J.~Ren, W.-X. Wang, B.~Li, and Y.-C. Lai,  Noise bridges dynamical
	  correlation and topology in coupled oscillator networks, {\em Physical Review
	  Letters}, 104 (2010), p.~058701.
	
	\bibitem{Banaszuk_eigenvalues}
	 T.~Sahai, A.~Speranzon, and A.~Banaszuk,  Hearing the clusters of a
	  graph: {A} distributed algorithm, {\em Automatica}, 48 (2012), pp.~15--24.
	

	\bibitem{Schmid}
	 P.~J. Schmid,  Dynamic mode decomposition of numerical and
		experimental data, {\em Journal of Fluid Mechanics}, 656 (2010), pp.~5--28.
	
	
	
	\bibitem{Timme_review}
	 M.~Timme and J.~Casadiego, Revealing networks from dynamics: an
	  introduction, {\em Journal of Physics A: Mathematical and Theoretical}, 47 (2014),
	  p.~343001.
	  
	  \bibitem{Timme2_net_ident}
	   S.~G. Shandilya and M.~Timme,  Inferring network topology from
	    complex dynamics, {\em New Journal of Physics}, 13 (2011), p.~013004.
    
    	
    \bibitem{Kibangou}
    T.-M.-D. Tran and A.~Y. Kibangou, Distributed estimation of
      {L}aplacian eigenvalues via constrained consensus optimization problems,
      {\em Systems \& Control Letters}, 80 (2015), pp.~56--62.
    
	  
	
	\bibitem{Tu}
	J.~H. Tu, C.~W. Rowley, D.~M. Luchtenburg, S.~L. Brunton, and J.~N. Kutz,
	{On dynamic mode decomposition: Theory and applications}, {\em Journal of
	Computational Dynamics}, 1 (2014), pp.~391 -- 421.
	
	
	\bibitem{von2007tutorial}
	 U. Von Luxburg, { A tutorial on spectral clustering}, {\em Statistics and computing}, 17(4), (2007), pp.~395--416.
	
	\bibitem{net_ident_compressed_sens}
	 W.-X. Wang, R.~Yang, Y.-C. Lai, V.~Kovanis, and M.~A.~F. Harrison,
	  Time-series--based prediction of complex oscillator networks via compressive
	  sensing, {\em Europhysics Letters}, 94 (2011), p.~48006.
	
	\bibitem{Yu_Parlitz_steady_state}
	 D.~Yu and U.~Parlitz, Driving a network to steady states reveals its
	  cooperative architecture, {\em Europhysics Letters}, 81 (2008), p.~48007.
	
	\bibitem{Yu_net_estimation}
	D.~Yu, M.~Righero, and L.~Kocarev, Estimating topology of networks,
	  {\em Physical Review Letters}, 97 (2006), p.~188701.


	
\end{thebibliography}

\end{document}